\newcommand{\lyxmathsym}[1]{\ifmmode\begingroup\def\b@ld{bold}
  \text{\ifx\math@version\b@ld\bfseries\fi#1}\endgroup\else#1\fi}
\numberwithin{equation}{section}
\numberwithin{figure}{section}
\theoremstyle{plain}
\newtheorem{thm}{\protect\theoremname}
\theoremstyle{definition}
\newtheorem{defn}[thm]{\protect\definitionname}
\theoremstyle{plain}
\newtheorem{lem}[thm]{\protect\lemmaname}
\theoremstyle{remark}
\newtheorem{rem}[thm]{\protect\remarkname}
\providecommand{\definitionname}{Definition}
\providecommand{\lemmaname}{Lemma}
\providecommand{\remarkname}{Remark}
\providecommand{\theoremname}{Theorem}
\begin{document}
\title{$H_{\infty}$ Inverse Optimal Attitude Tracking on the Special Orthogonal
Group $SO(3)$}

\author{Farooq~Aslam and M. Farooq Haydar\thanks{Farooq Aslam and M. Farooq Haydar are with the Department of Aeronautics
and Astronautics, Institute of Space Technology, Islamabad, Pakistan
(e-mail: farooq.aslam87@gmail.com and farooq.haydar@ist.edu.pk).}}

\maketitle
\begin{abstract}
The problem of attitude tracking using rotation matrices is addressed
using an approach which combines inverse optimality and $\mathcal{L}_{2}$
disturbance attenuation. Conditions are provided which solve the
inverse optimal nonlinear $H_{\infty}$ control problem by minimizing
a meaningful cost function. The approach guarantees that the energy
gain from an exogenous disturbance to a specified error signal respects
a given upper bound. For numerical simulations, a simple problem setup
from literature is considered and results demonstrate competitive
performance.
\end{abstract}

\section{Introduction}

Rigid-body attitude control is an extensively studied control problem
with numerous applications in aircraft, spacecraft, robotics, and
marine systems. Different attitude parametrizations, or coordinates,
have been used to develop a wide array of attitude control methods.
Among these, the rotation matrix or direction cosine matrix (DCM),
an element of the Special Orthogonal Group $SO\left(3\right)$, is
the only attitude representation which is both globally defined and
unique \cite{chaturvedi2011rigid}. Other attitude representations
either contain singularities (e.g., Euler angles) due to which they
are not globally defined, or provide a non-unique attitude representation
(e.g., quaternions) where two different coordinates describe the same
attitude. In the case of quaternions, the resulting ambiguity in attitude
representation can lead to \emph{unwinding} \cite{chaturvedi2011rigid}.

Due to the limitations inherent to various attitude representations,
several research efforts have sought to develop attitude controllers
directly on the Special Orthogonal Group $SO(3)$. For the attitude
tracking problem on $SO\left(3\right)$, a control design which has
received significant attention is a simple PD-type controller, so
named as it contains $\textit{proportional}$ and $\textit{derivative}$-like
terms representing the attitude and angular velocity errors, respectively
(see \cite{chaturvedi2011rigid,Kang1995,Lee2010CDC} and the references
therein). This controller has been shown to be almost semi-globally
exponentially stabilizing \cite{Lee2010CDC,Bullo2005}. Moreover,
for sufficiently large controller gains, the region of exponential
convergence covers almost the entire state space. Given the simplicity
and popularity of this controller, its disturbance rejection properties
are of considerable interest.

The $\mathcal{L}_{2}$ disturbance attenuation framework, or nonlinear
$H_{\infty}$ control, provides powerful tools for studying the disturbance
rejection properties of feedback controllers (see \cite{isidori1994h,van1993nonlinear,krener1994necessary}).
In the case of state feedback, the approach facilitates the development
of controllers which solve the suboptimal nonlinear $H_{\infty}$
problem, thereby ensuring that the energy gain from exogenous inputs,
such as an external disturbance, to a specified error signal respects
a given upper bound. 

Several papers have considered the problem of $\mathcal{L}_{2}$ disturbance
attenuation in the context of attitude control. In \cite{Dalsmo1997},
a suboptimal nonlinear $H_{\infty}$ state feedback problem is formulated
on $TSO\left(3\right)$, the tangent space of $SO\left(3\right)$,
and addressed using a quaternion-PD controller. Quaternion-based $H_{\infty}$
control is also investigated in \cite{Ikeda2011,Show2003} for state
feedback controllers, and in \cite{Cavalcanti2016} for PD control
with delayed state measurements. In \cite{Damaren2014}, $H_{\infty}$
attitude control is achieved with a state feedback controller using
Modified Rodrigues Parameters (MRPs). For control laws defined on
$SO\left(3\right)$, the suboptimal nonlinear $H_{\infty}$ problem
is addressed in \cite{Kang1995} for attitude errors smaller than
$90\lyxmathsym{\textdegree}$.

A powerful approach to robust stabilization is obtained by combining
the $\mathcal{L}_{2}$ disturbance attenuation framework with the
inverse optimal control method \cite{Freeman1996,Krstic98}. In the
disturbance-free case, inverse optimal attitude control has been studied
in \cite{Krstic99} for the Cayley-Rodrigues parameters, and in \cite{Bharadwaj1998}
for exponential coordinates. For bounded disturbances, \cite{Luo2005}
uses the $H_{\infty}$ inverse optimal control method, described in
\cite{Krstic98}, to establish the attitude tracking and $\mathcal{L}_{2}$
disturbance attenuation properties of a quaternion-PD state feedback
controller. Building further on these ideas, \cite{ParkYonmook2013}
demonstrates the robustness of quaternion-based PD control to unmodeled
actuator dynamics.

In this paper, we apply the inverse optimal $\mathcal{L}_{2}$ disturbance
attenuation framework \cite{Krstic98} to the problem of attitude
tracking on $SO\left(3\right)$. In particular, we develop a state
feedback controller directly on $SO\left(3\right)$ such that it solves
the inverse optimal nonlinear $H_{\infty}$ problem almost globally
(in the sense of \cite{chaturvedi2011rigid}), thereby respecting
a given upper bound on the energy gain from the disturbance to a specified
error signal. The main contribution of this work is the provision
of $H_{\infty}$ guarantees for a PD-type control law with scalar
gains, at the cost of some mild constraints on the controller gains.
We also demonstrate that the disturbance rejection properties of control
laws (with scalar gains) synthesized using a common configuration
error function on $SO(3)$, namely the chordal metric, tend to deteriorate
for very large errors. This is due to the reduction in control effectiveness
which has been observed in chordal metric-based control laws for very
large errors, and which is known to cause arbitrarily slow convergence
for errors close to $180\lyxmathsym{\textdegree}$ \cite{chaturvedi2011rigid,Lee2011ACC}.

The rest of the paper is organized as follows. Essential background
on nonlinear $H_{\infty}$ control and an important result in $H_{\infty}$
inverse optimal control are summarized in Section \ref{sec:Preliminaries},
along with some remarks on the notation. The attitude control problem
using rotation matrices is reviewed in Section \ref{sec:Problem-Formulation},
while the main results of this paper are presented in Section \ref{sec:Robust-Attitude-Tracking}.
In Section \ref{sec:Simulation-Results}, a simple simulation setup
is adapted from \cite{Luo2005}, and used to demonstrate the effectiveness
of the proposed control law in tracking and disturbance attenuation.
Finally, concluding remarks are given in the last section.

\section{Preliminaries\label{sec:Preliminaries}}

Let us consider a general nonlinear system:
\begin{align}
\dot{x} & =f(x)+g_{1}(x)d+g_{2}(x)u, & z & =\begin{bmatrix}h(x)\\
\sqrt{r}u
\end{bmatrix},\label{eq:Dynamics_Prelim}
\end{align}
where $x$ is the state vector, $d$ is an exogenous disturbance,
$u$ is the control input, $z$ is the penalized performance output
signal, $h\left(x\right)$ is a positive state penalty, and $r$ is
a positive scalar. We assume that the functions $f(x)$, $g_{1}(x)$,
$g_{2}(x)$, and $h(x)$ are smooth, and the origin $x=0$ is an equilibrium
point of \eqref{eq:Dynamics_Prelim}, i.e., $f(0)=h(0)=0$. Also,
the disturbance $d$ belongs to the set of bounded-energy signals,
i.e., $\int_{0}^{T}\left|d(t)\right|^{2}dt<\infty$ for all finite
$T\geq0$.

\subsection{Nonlinear $H_{\infty}$ Control}
\begin{defn}[Nonlinear $H_{\infty}$ control or $\mathcal{L}_{2}$-disturbance
attenuation problem \cite{krener1994necessary}]
 The goal is to find a state feedback $u=k(x)$ such that $\forall$
$T\geq0$, $x_{0}$, and $d(t)$, the $\mathcal{L}_{2}$ gain from
the disturbance $d$ to the output signal $z(x,u)$ is less than or
equal to $\gamma$. More precisely, for a positive function $S(x)\geq0$,
\begin{equation}
\int_{0}^{T}\left|z\left(t\right)\right|^{2}dt\leq S(x_{0})+\gamma^{2}\int_{0}^{T}\left|d\left(t\right)\right|^{2}dt.\label{eq:L2gain_Hinf-Krener}
\end{equation}
\end{defn}

As has been well established in the literature on nonlinear $H_{\infty}$
control, the requirement \eqref{eq:L2gain_Hinf-Krener} on the $\mathcal{L}_{2}$
gain is closely related to the notion of dissipativity. In particular,
we seek a control law such that a smooth candidate storage function
$V(x)\geq0$, with $V(0)=0$, is dissipative with respect to a given
supply rate $U\left(d,z\right)$, i.e., the following condition is
satisfied:
\begin{equation}
\dot{V}\leq U.\label{eq:Dissipation_Prelim}
\end{equation}
Along trajectories of \eqref{eq:Dynamics_Prelim}, we have that:
\begin{equation}
\dot{V}=V_{x}f+\left(V_{x}g_{1}\right)d+\left(V_{x}g_{2}\right)u,\label{eq:Vdot_Dynamics_Prelim}
\end{equation}
where $V_{x}$ is a row vector of partial derivatives with respect
to the state $x$. Consider the supply rate

\[
U\left(d,z\right)=\frac{\gamma^{2}}{4}d^{\top}d-\frac{1}{4}z^{\top}z,
\]
where $\gamma$ is a positive scalar. Substitute the penalized output
signal $z$ and the supply rate in \eqref{eq:Dissipation_Prelim}:

\begin{equation}
\dot{V}+\frac{1}{4}h^{\top}h+\frac{1}{4}ru^{\top}u-\frac{\gamma^{2}}{4}d^{\top}d\leq0.\label{eq:DissipationInequality_Prelim}
\end{equation}
Then, using \eqref{eq:Vdot_Dynamics_Prelim}, the worst-case disturbance
can be found as:

\begin{equation}
\left(V_{x}g_{1}\right)^{\top}-\frac{\gamma^{2}}{2}d^{*}=0\implies d^{*}=\frac{2}{\gamma^{2}}\left(V_{x}g_{1}\right)^{\top},\label{eq:WorstCaseDist_Prelim}
\end{equation}
and the optimal state feedback is given by:

\begin{equation}
u(x)=-\frac{2}{r}\left(V_{x}g_{2}\right)^{\top}.\label{eq:Feedback_Hinf}
\end{equation}
Substituting the worst-case disturbance $d^{*}$ and the above state
feedback in \eqref{eq:DissipationInequality_Prelim}, we arrive at
the following expression for the dissipativity condition, known as
the Hamilton-Jacobi-Isaacs (HJI) partial differential inequality:

\begin{equation}
V_{x}f-\frac{1}{r}\left|V_{x}g_{2}\right|^{2}+\frac{1}{\gamma^{2}}\left|V_{x}g_{1}\right|^{2}+\frac{1}{4}h^{\top}h\leq0.\label{eq:HJI_Inequality_Prelim}
\end{equation}

\subsection{$H_{\infty}$ Inverse Optimal Control}

A key step in solving the $H_{\infty}$ control problem involves finding
a storage function such that the HJI inequality is satisfied. In situations
where finding a suitable storage function isn't easy, the dissipativity
requirement can be addressed by invoking the notion of inverse optimality.
In the $H_{\infty}$ inverse optimal method, the candidate storage
function $V(x)$, re-interpreted as a candidate Lyapunov function,
is used to obtain a constructive $H_{\infty}$ state-feedback control
law. In particular, the inverse optimal approach constructs an optimal
control problem, subject to the dynamics \eqref{eq:Dynamics_Prelim},
whose value function is determined by the candidate Lyapunov function
$V(x)$.

In the following, we state a simplified version of an important result
on $H_{\infty}$ inverse optimal control. The result is from \cite{Krstic98},
and is used in \cite{Luo2005} to obtain a quaternion-based nonlinear
$H_{\infty}$ inverse optimal attitude tracking control law. 
\begin{lem}
\cite{Krstic98}\label{lem:InverseOptimal_Prelim} Consider the nonlinear
system \eqref{eq:Dynamics_Prelim}, a candidate Lyapunov function
$V\left(x\right)$, and the auxiliary system
\begin{equation}
\dot{x}=f(x)+\frac{1}{\gamma^{2}}g_{1}(x)\left(V_{x}g_{1}\right)^{\top}+g_{2}(x)u.\label{eq:AuxiliarySystem_Prelim}
\end{equation}
Suppose that the control law
\begin{equation}
u=\alpha(x)=-\frac{1}{r}\left(V_{x}g_{2}\right)^{\top}\label{eq:Feedback_AuxSys}
\end{equation}
globally asymptotically stabilizes \eqref{eq:AuxiliarySystem_Prelim}
with respect to $V(x)$. Then the control law \eqref{eq:Feedback_Hinf}
solves the inverse optimal $H_{\infty}$ problem for \eqref{eq:Dynamics_Prelim}
by minimizing the cost functional
\begin{multline}
J(u)=\sup_{d\in D}\bigg(\lim_{t\rightarrow\infty}\bigg\{4V(x(t))\\
+\int_{0}^{t}\left[l(x)+ru^{\top}u-\gamma^{2}d^{\top}d\right]d\tau\bigg\}\bigg),\label{eq:Hinf_InvOpt_OCP}
\end{multline}
where $D$ is the set of locally bounded functions of the state, and
the state penalty is
\begin{equation}
l\left(x\right)=-4\left[V_{x}f-\frac{1}{r}\left|V_{x}g_{2}\right|^{2}+\frac{1}{\gamma^{2}}\left|V_{x}g_{1}\right|^{2}\right].\label{eq:Hinf_InvOpt_StatePenalty}
\end{equation}
Furthermore, the value function of \eqref{eq:Hinf_InvOpt_OCP} is
$4V\left(x\right)$, the optimal cost equals $J\left(u\right)=4V\left(x\left(0\right)\right)$,
and the worst-case disturbance is given by \eqref{eq:WorstCaseDist_Prelim}.
Lastly, the function $V\left(x\right)$ solves the following HJI equation:
\begin{equation}
V_{x}f-\frac{1}{r}\left|V_{x}g_{2}\right|^{2}+\frac{1}{\gamma^{2}}\left|V_{x}g_{1}\right|^{2}+\frac{l\left(x\right)}{4}=0,\label{eq:HJI_Equation_Prelim}
\end{equation}
and the achieved disturbance attenuation level is
\begin{equation}
\int_{0}^{\infty}\left[l\left(x\right)+ru^{\top}u\right]dt\leq\gamma^{2}\int_{0}^{\infty}\left\Vert d\right\Vert ^{2}dt.\label{eq:Disturbance_attenuation_level-gen}
\end{equation}
\end{lem}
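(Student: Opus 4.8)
The plan is to funnel every assertion through a single completion-of-squares identity, after which each claim is read off almost mechanically. The first point to notice is that the HJI equation \eqref{eq:HJI_Equation_Prelim} requires no proof: the penalty $l(x)$ in \eqref{eq:Hinf_InvOpt_StatePenalty} is \emph{defined} so that \eqref{eq:HJI_Equation_Prelim} holds identically, so $V$ trivially solves it. Abbreviating the implemented feedback \eqref{eq:Feedback_Hinf} as $u^{*}=-\tfrac{2}{r}(V_xg_2)^\top$ and the worst-case disturbance \eqref{eq:WorstCaseDist_Prelim} as $d^{*}=\tfrac{2}{\gamma^2}(V_xg_1)^\top$, I would first establish, along \emph{any} trajectory of \eqref{eq:Dynamics_Prelim} driven by arbitrary $u$ and $d$, the identity
\begin{multline*}
\tfrac{d}{dt}\,4V(x)+l(x)+ru^\top u-\gamma^2 d^\top d\\
=r\,\lvert u-u^{*}\rvert^2-\gamma^2\,\lvert d-d^{*}\rvert^2.
\end{multline*}
This follows by expanding $\tfrac{d}{dt}4V=4V_x(f+g_1d+g_2u)$, eliminating $4V_xf$ via \eqref{eq:HJI_Equation_Prelim}, and completing the square separately in $u$ about $u^{*}$ and in $d$ about $d^{*}$; the quadratic and cross terms match exactly because of the coefficients $-1/r$ and $1/\gamma^2$ built into \eqref{eq:Hinf_InvOpt_StatePenalty}. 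This identity is the engine for everything that follows.

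Before using it, I would verify that the cost \eqref{eq:Hinf_InvOpt_OCP} is meaningful, i.e.\ $l(x)\ge0$ --- the one place where the stabilization hypothesis is essential. Differentiating $V$ along the auxiliary system \eqref{eq:AuxiliarySystem_Prelim} under the control \eqref{eq:Feedback_AuxSys} gives
\[
\dot V=V_xf+\tfrac{1}{\gamma^2}\lvert V_xg_1\rvert^2-\tfrac1r\lvert V_xg_2\rvert^2=-\tfrac14 l(x).
\]
Since that control globally asymptotically stabilizes \eqref{eq:AuxiliarySystem_Prelim} with respect to $V$, we have $\dot V\le0$, whence $l(x)\ge0$ (with $l(0)=0$). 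Thus $l$ is an admissible state penalty and \eqref{eq:Hinf_InvOpt_OCP} penalizes a genuine, nonnegative running cost.

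Integrating the central identity from $0$ to $t$ gives, for every $u$ and $d$,
\begin{multline*}
4V(x(t))+\int_0^t\!\bigl[l+ru^\top u-\gamma^2 d^\top d\bigr]\,d\tau\\
=4V(x(0))+\int_0^t\!\bigl[r\lvert u-u^{*}\rvert^2-\gamma^2\lvert d-d^{*}\rvert^2\bigr]\,d\tau,
\end{multline*}
and the two optimality claims follow by reading off the right-hand side. For the lower bound $J(u)\ge4V(x(0))$, valid for all $u$, I would insert the \emph{feedback} disturbance $d=d^{*}(x)$, which annihilates the $d$-term and leaves a nonnegative integral; since the right-hand side no longer contains $x(t)$, the limit $t\to\infty$ exists and dominates $4V(x(0))$. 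For the reverse direction I would set $u=u^{*}$, killing the $u$-term, so the supremum over $d$ is attained at $d=d^{*}$ and equals zero, giving $J(u^{*})=4V(x(0))$. Hence $u^{*}$ minimizes \eqref{eq:Hinf_InvOpt_OCP}, the value function is $4V(x)$, and the saddle point is $(u^{*},d^{*})$. The $\mathcal L_2$ bound \eqref{eq:Disturbance_attenuation_level-gen} comes from the same integrated identity with $u=u^{*}$: its right-hand side is then $-\gamma^2\int_0^t\lvert d-d^{*}\rvert^2\,d\tau\le0$, so using $V\ge0$, $x(0)=0$, and letting $t\to\infty$ yields \eqref{eq:Disturbance_attenuation_level-gen}.

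I expect the minimax step to be the only delicate part. The worst-case disturbance is itself a state feedback $d^{*}(x)$, so the supremum in \eqref{eq:Hinf_InvOpt_OCP} must be taken over the class $D$ of locally bounded functions of the state --- which is precisely why $D$ is defined that way and why $d^{*}$ is admissible --- and one must justify exchanging $\lim_{t\to\infty}$ with $\sup_{d\in D}$. The saving grace is the exactness of the identity: because the boundary term $4V(x(t))$ is absorbed into the right-hand side, no growth or convergence estimate on $V$ along closed-loop trajectories is needed beyond $V\ge0$, so the whole argument rests on the algebraic identity together with the single analytic input $l\ge0$ supplied by the stabilization hypothesis.
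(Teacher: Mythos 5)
Your proof is correct, and it is essentially the standard completion-of-squares argument behind the cited result: the paper itself gives no proof of this lemma, importing it verbatim from \cite{Krstic98}, and only records in the accompanying remark the one fact you also isolate, namely that $l(x)=-4\dot V_{\text{aux}}\ge 0$ follows from the stabilization hypothesis on the auxiliary system. Your central identity, the saddle-point evaluation at $(u^{*},d^{*})$, and the observation that the $\mathcal{L}_{2}$ bound \eqref{eq:Disturbance_attenuation_level-gen} implicitly assumes $x(0)=0$ (or retains the $4V(x(0))$ offset) all match the argument in the reference, so there is nothing to add.
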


\begin{rem}
We note that the $H_{\infty}$ inverse optimal state penalty $l\left(x\right)$
in \eqref{eq:Hinf_InvOpt_StatePenalty} equals $-4\dot{V}_{\text{aux}}$,
where
\[
\dot{V}_{\text{aux}}=V_{x}f+\frac{1}{\gamma^{2}}\left|V_{x}g_{1}\right|^{2}-\frac{1}{r}\left|V_{x}g_{2}\right|^{2}
\]
is the Lyapunov rate for the closed-loop auxiliary system \eqref{eq:AuxiliarySystem_Prelim}-\eqref{eq:Feedback_AuxSys}.
From the assumptions of the lemma, we know that $\dot{V}_{\text{aux}}<0$.
As a result, $l\left(x\right)$ is positive definite, and the performance
index in \eqref{eq:Hinf_InvOpt_OCP} is a meaningful cost, since it
effectively penalizes the state and the control for each $\left(x,u\right)$
with a positive penality. Moreover, in the HJI equation \eqref{eq:HJI_Equation_Prelim},
the state penalty $l\left(x\right)$ replaces the term $h^{\top}h$
in the HJI inequality \eqref{eq:HJI_Inequality_Prelim}. Thus, the
$H_{\infty}$ inverse optimal method boils down to stabilizing the
auxiliary system using the $\left(L_{g}V\right)$-type state-feedback
\eqref{eq:Feedback_AuxSys}. Note that the auxiliary term in \eqref{eq:AuxiliarySystem_Prelim}
is also an $\left(L_{g}V\right)$-type term expressing the contribution
of the worst-case disturbance to the HJI inequality \eqref{eq:HJI_Inequality_Prelim}.
\end{rem}

\subsection{Notation}

The Special Orthogonal group $SO(3)$ is the set of $3\times3$ orthogonal
matrices with determinant $1$, i.e.,
\begin{align*}
SO(3) & =\left\{ R\in\mathbb{R}^{3\times3}:R^{\top}R=I,\det\left(R\right)=1\right\} .
\end{align*}
The \textit{hat} map $\wedge:\mathbb{R}^{3}\rightarrow\mathfrak{so}(3)$
transforms a vector in $\mathbb{R}^{3}$ to a $3\times3$ skew-symmetric
matrix such that $\hat{x}y=x\times y$ for any $x,y\in\mathbb{R}^{3}$.
Sometimes, $\hat{x}$ is written as $\left(x\right)^{\times}$ for
clarity. In particular,
\[
\hat{x}=\left(x\right)^{\times}=\begin{bmatrix}0 & -x_{3} & x_{2}\\
x_{3} & 0 & -x_{1}\\
-x_{2} & x_{1} & 0
\end{bmatrix}.
\]
The inverse of the hat map is denoted by the \textit{vee} map $\vee:\mathfrak{so}(3)\rightarrow\mathbb{R}^{3}$.
Finally, we recall some useful properties of the hat map \cite{Lee2011ACC}:

\begin{align}
\hat{x}y & =x\times y=-y\times x=-\hat{y}x,\label{eq:Identity1}\\
x^{\top}\hat{x} & =0,\quad\hat{x}=-\hat{x}^{\top},\label{eq:Identity2}\\
\text{tr}\left[A\hat{x}\right] & =\frac{1}{2}\text{tr}\left[\hat{x}\left(A-A^{\top}\right)\right]=-x^{\top}\left(A-A^{\top}\right)^{\vee},\label{eq:Identity3}\\
\hat{x}A+A^{\top}\hat{x} & =\left(\left\{ \text{tr}[A]I-A\right\} x\right)^{\times},\label{eq:Identity4}\\
R\hat{x}R^{\top} & =\left(Rx\right)^{\times},\label{eq:Identity5}
\end{align}
for any $x,y\in\mathbb{R}^{3}$, $A\in\mathbb{R}^{3\times3}$, and
$R\in SO(3)$.

\section{Problem Formulation\label{sec:Problem-Formulation}}

The equations of motion for rigid-body rotation can be written as:
\begin{equation}
\begin{split}\dot{R} & =R\omega^{\times}\\
J\dot{\omega} & =-\omega^{\times}J\omega+\tau+d,
\end{split}
\label{eq:EoMs}
\end{equation}
where $R\in SO(3)$ is the rotation matrix, $\omega\in T_{I}SO(3)=\mathfrak{so}(3)$
is the angular velocity vector expressed in the body-fixed frame,
$J$ is the moment of inertia matrix, $\tau$ is the net external
torque, and $d$ is the disturbance torque. We are interested in tracking
a given reference attitude which obeys the kinematics

\begin{equation}
\dot{R}_{d}=R_{d}\omega_{d}^{\times},\label{eq:ReferenceKinematics}
\end{equation}
where the subscript in $R_{d}$ stands for 'desired' or reference. 

We define the right attitude error \cite[pg. 554]{Bullo2005} as

\begin{equation}
R_{e}:=R_{d}^{\top}R\in SO(3).\label{eq:AttitudeError}
\end{equation}
Using \eqref{eq:EoMs}-\eqref{eq:AttitudeError} and the identity
\eqref{eq:Identity5}, the error kinematics can be expressed as

\begin{align*}
\dot{R}_{e} & =R_{e}\left(\omega-R_{e}^{\top}\omega_{d}\right)^{\times}.
\end{align*}
Next, we define the right angular velocity error as \cite[pg. 555]{Bullo2005}:

\begin{equation}
\omega_{e}:=\omega-R_{e}^{\top}\omega_{d}.\label{eq:VelocityError}
\end{equation}
Consequently, the error kinematics can be expressed as

\begin{equation}
\dot{R}_{e}=R_{e}\omega_{e}^{\times}.\label{eq:ErrorKinematics_Right}
\end{equation}
Using \eqref{eq:EoMs} and \eqref{eq:VelocityError}-\eqref{eq:ErrorKinematics_Right},
the error dynamics is given by:

\begin{align*}
J\dot{\omega}_{e} & =J\dot{\omega}-JR_{e}^{\top}\dot{\omega}_{d}-J\dot{R}_{e}^{\top}\omega_{d}\\
 & =-\left(\omega_{e}+R_{e}^{\top}\omega_{d}\right)^{\times}J\left(\omega_{e}+R_{e}^{\top}\omega_{d}\right)+\tau+d\\
 & \begin{aligned}\quad\; & -JR_{e}^{\top}\dot{\omega}_{d}+J\omega_{e}^{\times}R_{e}^{\top}\omega_{d}\end{aligned}
\end{align*}
Using identities \eqref{eq:Identity1} and \eqref{eq:Identity4}
and some algebraic manipulation, three of the resulting terms can
be simplified as follows:
\[
-\omega_{e}^{\times}JR_{e}^{\top}\omega_{d}-\left(R_{e}^{\top}\omega_{d}\right)^{\times}J\omega_{e}+J\omega_{e}^{\times}R_{e}^{\top}\omega_{d}=-\omega_{e}^{\times}\bar{J}R_{e}^{\top}\omega_{d}
\]
where $\bar{J}:=2J-\text{tr}[J]I$. Consequently, the error dynamics
can be expressed as
\begin{equation}
J\dot{\omega}_{e}=-\omega_{e}^{\times}J\omega_{e}+\tau+d_{e},\label{eq:ErrorDynamics}
\end{equation}
where

\begin{equation}
d_{e}:=d-\omega_{e}^{\times}\bar{J}R_{e}^{\top}\omega_{d}-JR_{e}^{\top}\dot{\omega}_{d}-\left(R_{e}^{\top}\omega_{d}\right)^{\times}JR_{e}^{\top}\omega_{d}\label{eq:ExtendedDisturbance}
\end{equation}
includes the disturbance torque $d\left(t\right)$ as well as the
terms containing the reference signals $\omega_{d}(t)$ and its time
derivative $\dot{\omega}_{d}(t)$. Henceforth, the vector $d_{e}\left(t\right)$
is referred to as the $\textit{extended}$ disturbance.

The choice of configuration and velocity error functions is pivotal
for control design on manifolds. We select the following configuration
error function:

\begin{equation}
\Psi(R_{e}):=\frac{1}{2}\text{tr}\left[I-R_{e}\right].\label{eq:ConfigurationErrorFunction}
\end{equation}
From \eqref{eq:ErrorKinematics_Right}, the time derivative of $\Psi$
is given by

\begin{align*}
\dot{\Psi} & =-\frac{1}{2}\text{tr}\left[\dot{R}_{e}\right]=-\frac{1}{2}\text{tr}\left[R_{e}\omega_{e}^{\times}\right]=\frac{1}{2}\left(R_{e}-R_{e}^{\top}\right)^{\vee}\cdot\omega_{e},
\end{align*}
where the last equality follows from the identity \eqref{eq:Identity3}.
Define the configuration error vector as

\begin{equation}
e_{R}:=\frac{1}{2}\left(R_{e}-R_{e}^{\top}\right)^{\vee}.\label{eq:ConfigurationErrorVector}
\end{equation}
Then, the rate of change of the configuration error function can be
expressed as follows:

\begin{equation}
\dot{\Psi}=e_{R}\cdot\omega_{e}.\label{eq:ConfigurationErrorFunction_TimeDerivative}
\end{equation}

The time derivative of the configuration error vector can be obtained
as follows:

\begin{align}
\dot{e}_{R}^{\times} & =\frac{1}{2}\left(\dot{R}_{e}-\dot{R}_{e}^{\top}\right)\nonumber \\
 & =\frac{1}{2}\left(R_{e}\hat{\omega}_{e}+\hat{\omega}_{e}R_{e}^{\top}\right)\nonumber \\
 & =\frac{1}{2}\left[\left(\text{tr}[R_{e}^{\top}]I-R_{e}^{\top}\right)\omega_{e}\right]^{\times}\nonumber \\
\implies\dot{e}_{R} & =\frac{1}{2}\left[\left(\text{tr}[R_{e}^{\top}]I-R_{e}^{\top}\right)\omega_{e}\right]\label{eq:eR_dot}
\end{align}
In the above simplification, the third line follows from the identity
\eqref{eq:Identity4}. Suppose that:

\begin{equation}
E(R_{e}):=\text{tr}[R_{e}^{\top}]I-R_{e}^{\top}.\label{eq:E}
\end{equation}
Then, the time derivative of the configuration error vector can be
expressed as follows:
\begin{equation}
\dot{e}_{R}=\frac{1}{2}E(R_{e})\omega_{e}.\label{eq:ConfigurationErrorVector_TimeDerivative}
\end{equation}

Next, consider a control law of the form

\begin{equation}
\tau=u\left(e_{R},\omega_{e}\right)+u_{\text{FF}},\label{eq:PD+}
\end{equation}
where $u\left(e_{R},\omega_{e}\right)\in\mathbb{R}^{3}$ is a feedback
term, to be specified below, and

\begin{equation}
u_{\text{FF}}:=\omega_{e}^{\times}\bar{J}R_{e}^{\top}\omega_{d}+\left(R_{e}^{\top}\omega_{d}\right)^{\times}JR_{e}^{\top}\omega_{d}+JR_{e}^{\top}\dot{\omega}_{d}\label{eq:Feedforward}
\end{equation}
is an optional feedforward compensation term which can be used to
cancel the contribution of the terms in \eqref{eq:ExtendedDisturbance}
which contain the reference angular velocity $\omega_{d}\left(t\right)$
or its time derivative $\dot{\omega}_{d}(t)$ . Consequently, from
\eqref{eq:ErrorDynamics} and \eqref{eq:ConfigurationErrorVector_TimeDerivative}-\eqref{eq:PD+},
with $u_{\text{FF}}=0$, the tracking error system with extended disturbance
can be expressed as:

\begin{equation}
\begin{split}\dot{e}_{R} & =\frac{1}{2}E(R_{e})\omega_{e},\\
J\dot{\omega}_{e} & =-\omega_{e}^{\times}J\omega_{e}+u+d_{e}.
\end{split}
\label{eq:TrackingErrorSystem_ExtendedDisturbance}
\end{equation}

Our goal is to find a feedback control law $u\left(e_{R},\omega_{e}\right)\in\mathbb{R}^{3}$
which achieves $H_{\infty}$ inverse optimal attitude tracking problem
on $SO(3)$. In other words, we seek a stabilizing feedback control
law which tracks the given reference signal ($R_{d},\omega_{d}$)
while respecting a given upper bound on the energy gain from the extended
disturbance $d_{e}$ to a linear combination of suitable state and
input penalties, as in \eqref{eq:Disturbance_attenuation_level-gen}.

\section{Robust Attitude Tracking on $SO(3)$\label{sec:Robust-Attitude-Tracking}}

\subsection{Candidate Storage Function}

We consider the following candidate storage function:

\begin{equation}
V\left(R_{e},\omega_{e}\right):=\frac{a}{2}\omega_{e}\cdot J\omega_{e}+be_{R}\cdot J\omega_{e}+2c\Psi\left(R_{e}\right),\label{eq:CandidateStorageFunction_SO3}
\end{equation}
where $\omega_{e}$ is the angular velocity error \eqref{eq:VelocityError},
$\Psi$ is the configuration error function \eqref{eq:ConfigurationErrorFunction},
$e_{R}$ is its associated configuration error vector \eqref{eq:ConfigurationErrorVector},
and $a,b,c$ are positive scalars. We note that the configuration
error function is bounded as \cite{Lee2010CDC}:

\begin{equation}
\frac{1}{2}\left\Vert e_{R}\right\Vert ^{2}\leq\Psi\left(R_{e}\right)\leq\frac{1}{2-\psi}\left\Vert e_{R}\right\Vert ^{2},\label{eq:Psi_Bounds}
\end{equation}
where $\psi$ is a constant such that $0<\psi<2$, and the upper bound
holds when $\Psi\left(R_{e}\right)\leq\psi.$ Using the lower bound,
we obtain
\[
V\geq\frac{1}{2}\begin{bmatrix}e_{R}\\
\omega_{e}
\end{bmatrix}^{\top}\begin{bmatrix}2cI & bJ\\
bJ & aJ
\end{bmatrix}\begin{bmatrix}e_{R}\\
\omega_{e}
\end{bmatrix}.
\]
Consequently, a sufficient condition for $V$ to be positive definite
is given by

\begin{equation}
2acI>b^{2}J.\label{eq:V_PosDef}
\end{equation}
Using \eqref{eq:ConfigurationErrorFunction_TimeDerivative}, we
have that:

\begin{align*}
\dot{V} & =a\omega_{e}\cdot J\dot{\omega}_{e}+b\dot{e}_{R}\cdot J\omega_{e}+be_{R}\cdot J\dot{\omega}_{e}+2c\dot{\Psi},\\
 & =\left(a\omega_{e}+be_{R}\right)\cdot J\dot{\omega}_{e}+b\dot{e}_{R}\cdot J\omega_{e}+2ce_{R}\cdot\omega_{e}.
\end{align*}
Substituting the error dynamics from \eqref{eq:TrackingErrorSystem_ExtendedDisturbance}
and using the identity \eqref{eq:Identity2}, it follows that:

\begin{multline*}
\dot{V}=-be_{R}\cdot\omega_{e}^{\times}J\omega_{e}+b\dot{e}_{R}\cdot J\omega_{e}+2ce_{R}\cdot\omega_{e}\\
+\left(a\omega_{e}+be_{R}\right)\cdot\left(u+d_{e}\right).
\end{multline*}
We observe that:
\[
e_{R}\cdot\omega_{e}^{\times}J\omega_{e}=-\omega_{e}^{\times}e_{R}\cdot J\omega_{e}=e_{R}^{\times}\omega_{e}\cdot J\omega_{e}
\]
Therefore, from \eqref{eq:ConfigurationErrorVector} and \eqref{eq:eR_dot}-\eqref{eq:E},
it follows that:
\begin{align*}
\dot{e}_{R}-e_{R}^{\times}\omega_{e} & =\frac{1}{2}\left(\left(\text{tr}[R_{e}^{\top}]I-R_{e}^{\top}\right)\omega_{e}-\left(R_{e}-R_{e}^{\top}\right)\omega_{e}\right)\\
 & =\frac{1}{2}\left(\text{tr}[R_{e}]I-R_{e}\right)\omega_{e}\\
 & =\frac{1}{2}E^{\top}(R_{e})\omega_{e}
\end{align*}
Consequently, the rate of change of the storage function along trajectories
of the error system \eqref{eq:TrackingErrorSystem_ExtendedDisturbance}
can be expressed as:

\begin{multline}
\dot{V}=\frac{b}{2}J\omega_{e}\cdot E^{\top}(R_{e})\omega_{e}+\left(a\omega_{e}+be_{R}\right)\cdot\left(u+d_{e}\right)\\
+2ce_{R}\cdot\omega_{e}\label{eq:Vdot_Sys1_A}
\end{multline}

\subsection{Auxiliary System}

In this section, we use the storage function discussed above to study
the robustness properties of the tracking error system \eqref{eq:TrackingErrorSystem_ExtendedDisturbance}.
We employ the inverse optimal robust control approach, as summarized
in Section \ref{sec:Preliminaries}. In particular, we consider an
auxiliary system associated with \eqref{eq:TrackingErrorSystem_ExtendedDisturbance},
and establish its closed-loop stability under a PD-like feedback control
law. 

The auxiliary tracking error system is given as:

\begin{equation}
\begin{split}\dot{e}_{R} & =\frac{1}{2}E(R_{e})\omega_{e},\\
J\dot{\omega}_{e} & =-\omega_{e}^{\times}J\omega_{e}+u+\frac{1}{\gamma^{2}}\left(a\omega_{e}+be_{R}\right)
\end{split}
\label{eq:AuxiliarySystem}
\end{equation}
It is seen that except for replacing the extended disturbance $d_{e}$
by the expression $\frac{1}{\gamma^{2}}\left(\omega_{e}+be_{R}\right)$,
the auxiliary system is similar to the tracking error system \eqref{eq:TrackingErrorSystem_ExtendedDisturbance}.

We recall that due to a topological restriction, smooth control laws
cannot achieve global stability in problems involving rotational degrees
of freedom \cite{bhat2000topological}. Therefore, we invoke standard
notions of almost global asymptotic stability (AGAS) and almost semi-global
exponential stability (AsGES) \cite{chaturvedi2011rigid,Lee2015}
to prove the stability properties of the auxiliary system. 
\begin{thm}
\label{thm:Theorem_AuxiliarySystem}Consider the auxiliary tracking
error system \eqref{eq:AuxiliarySystem} and the control law

\begin{equation}
u=-\frac{1}{r}\left(a\omega_{e}+be_{R}\right).\label{eq:StateFeedback_A}
\end{equation}
Suppose the coefficients $a,b,c$ in the candidate Lyapunov function
\eqref{eq:CandidateStorageFunction_SO3} and the scalars $\gamma$,
$r$ satisfy the following conditions:
\begin{align}
\gamma^{2} & >r>0, & c & =ab\alpha, & 0<b\lambda_{j}<a^{2}\alpha,\label{eq:Conditions_AuxSys_A}
\end{align}
where $\lambda_{j}$ is the maximum eigenvalue of $J$, and

\begin{equation}
\alpha:=\frac{1}{r}-\frac{1}{\gamma^{2}}.\label{eq:alpha}
\end{equation}
Then, the zero equilibrium $\left(e_{R},\omega_{e}\right)=\left(0,0\right)$
of \eqref{eq:AuxiliarySystem} is almost globally asymptotically stable
and almost semi-globally exponentially stable, i.e., the set of initial
conditions for which exponential stability is guaranteed almost covers
$SO\left(3\right)\times\mathbb{R}^{3}$ when $\alpha$ is sufficiently
large:
\begin{align}
\Psi\left(R_{e}\left(0\right)\right) & \leq\psi<2\nonumber \\
\left\Vert \omega_{e}\left(0\right)\right\Vert ^{2} & \leq\frac{2b\alpha}{\lambda_{j}}\left(\psi-\Psi\left(R_{e}\left(0\right)\right)\right).\label{eq:InitialConditions_ExponentialStability}
\end{align}
\end{thm}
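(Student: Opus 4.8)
The plan is to evaluate the Lyapunov rate $\dot{V}$ along the closed loop formed by the auxiliary system \eqref{eq:AuxiliarySystem} and the feedback \eqref{eq:StateFeedback_A}, and to exhibit the conditions \eqref{eq:Conditions_AuxSys_A} as exactly those which render $\dot{V}$ negative definite. Starting from the expression \eqref{eq:Vdot_Sys1_A} for $\dot{V}$ along the tracking error system, I would replace the extended disturbance $d_{e}$ by the auxiliary term $\frac{1}{\gamma^{2}}(a\omega_{e}+be_{R})$ and substitute $u=-\frac{1}{r}(a\omega_{e}+be_{R})$. Since $u+\frac{1}{\gamma^{2}}(a\omega_{e}+be_{R})=-\alpha(a\omega_{e}+be_{R})$ with $\alpha$ as in \eqref{eq:alpha}, this collapses \eqref{eq:Vdot_Sys1_A} to
\[
\dot{V}=\frac{b}{2}J\omega_{e}\cdot E^{\top}(R_{e})\omega_{e}-\alpha a^{2}\|\omega_{e}\|^{2}-2ab\alpha\,e_{R}\cdot\omega_{e}-\alpha b^{2}\|e_{R}\|^{2}+2c\,e_{R}\cdot\omega_{e}.
\]
The decisive observation is that the gain choice $c=ab\alpha$ cancels the indefinite cross term $e_{R}\cdot\omega_{e}$, leaving $\dot{V}=\frac{b}{2}J\omega_{e}\cdot E^{\top}(R_{e})\omega_{e}-\alpha a^{2}\|\omega_{e}\|^{2}-\alpha b^{2}\|e_{R}\|^{2}$.

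Next I would dominate the remaining gyroscopic-like term. Using Cauchy--Schwarz together with $\|J\omega_{e}\|\le\lambda_{j}\|\omega_{e}\|$ and the spectral bound $\|E^{\top}(R_{e})\|_{2}\le 2$ (which follows from the eigenstructure of $R_{e}\in SO(3)$, whose eigenvalues are $1,e^{\pm i\theta}$), one obtains $\frac{b}{2}J\omega_{e}\cdot E^{\top}(R_{e})\omega_{e}\le b\lambda_{j}\|\omega_{e}\|^{2}$, hence
\[
\dot{V}\le-(a^{2}\alpha-b\lambda_{j})\|\omega_{e}\|^{2}-\alpha b^{2}\|e_{R}\|^{2}.
\]
The constraint $0<b\lambda_{j}<a^{2}\alpha$ in \eqref{eq:Conditions_AuxSys_A} is precisely what makes the right-hand side negative definite in $(e_{R},\omega_{e})$; combined with $\gamma^{2}>r>0$ (so that $\alpha>0$) and \eqref{eq:V_PosDef} (so that $V>0$), this shows that $V$ is a global, non-increasing Lyapunov function for the closed-loop auxiliary system.

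For almost global asymptotic stability I would invoke LaSalle's invariance principle. The bound above vanishes only when $e_{R}=0$ and $\omega_{e}=0$; since $e_{R}=\frac{1}{2}(R_{e}-R_{e}^{\top})^{\vee}=0$ characterises the symmetric rotations, the largest invariant set in $\{\dot{V}=0\}$ consists of the desired equilibrium $(R_{e},\omega_{e})=(I,0)$ together with the undesired equilibria at rotations by $180^{\circ}$ (with $\omega_{e}=0$). Following the standard analysis on $SO(3)$ \cite{chaturvedi2011rigid,Bullo2005}, these undesired equilibria are unstable with stable manifolds of measure zero, so every trajectory outside a measure-zero set converges to $(I,0)$, which yields AGAS.

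Finally, for almost semi-global exponential stability I would restrict attention to the region $\Psi(R_{e})\le\psi$, where the sharpened upper bound $\Psi\le\frac{1}{2-\psi}\|e_{R}\|^{2}$ of \eqref{eq:Psi_Bounds} applies. There $V$ is sandwiched between two positive-definite quadratic forms in $z=(e_{R},\omega_{e})$ --- the lower one from $\Psi\ge\frac{1}{2}\|e_{R}\|^{2}$ with \eqref{eq:V_PosDef}, the upper one from $\Psi\le\frac{1}{2-\psi}\|e_{R}\|^{2}$ --- while the estimate above gives $\dot{V}\le-\mu\|z\|^{2}$ for some $\mu>0$. Combining these produces $\dot{V}\le-\kappa V$ for some $\kappa>0$, and hence exponential convergence of $z$. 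The remaining point, which I expect to be the main obstacle, is to certify that a trajectory launched from the set \eqref{eq:InitialConditions_ExponentialStability} never leaves $\{\Psi\le\psi\}$, so that the sharpened bounds hold for all $t\ge0$: since $\dot{V}\le0$ globally, $V$ is non-increasing, and the bound on $\|\omega_{e}(0)\|^{2}$ is exactly what guarantees that the sublevel set of $V$ containing the initial state remains inside $\{\Psi\le\psi\}$, establishing positive invariance and closing the argument.
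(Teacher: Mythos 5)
Your computation of $\dot{V}$ along the closed-loop auxiliary system, the cancellation of the cross term via $c=ab\alpha$, the domination of the gyroscopic term using $\left\Vert E(R_{e})\right\Vert \leq2$, and the resulting negative-definite bound are exactly the paper's argument, and your LaSalle/unstable-manifold discussion of the undesired equilibria is a reasonable (slightly more explicit) account of why the result is only \emph{almost} global. The problem is the last step, which you yourself flag as the main obstacle and then resolve incorrectly. You claim that because $\dot{V}\leq0$, the sublevel set of $V$ containing the initial state stays inside $\{\Psi\leq\psi\}$, and that the bound on $\left\Vert \omega_{e}(0)\right\Vert ^{2}$ in \eqref{eq:InitialConditions_ExponentialStability} is exactly what certifies this. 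It is not: $V$ contains the sign-indefinite cross term $be_{R}\cdot J\omega_{e}$, so $V(0)$ can exceed $2c\Psi(R_{e}(0))+\frac{a}{2}\omega_{e}(0)\cdot J\omega_{e}(0)$, and the sublevel set $\{V\leq V(0)\}$ generated by an initial condition satisfying \eqref{eq:InitialConditions_ExponentialStability} need not be contained in $\{\Psi\leq\psi\}$. Indeed, already for $\omega_{e}(0)=0$ and $\Psi(R_{e}(0))$ close to $\psi$ one has $V(0)\approx2ab\alpha\psi$, whereas a chain of estimates of the form $\text{(const)}\cdot\Psi(t)\leq V(t)\leq V(0)$ only yields a constant of roughly $ab\alpha$, so monotonicity of $V$ alone does not confine $\Psi(t)$ below $\psi$ for the set of initial conditions stated in the theorem.

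The paper closes this gap with a separate, cross-term-free auxiliary function $U=\frac{a}{2}\omega_{e}\cdot J\omega_{e}+ab\alpha\Psi$. Along the closed-loop auxiliary system the term $-ab\alpha\,e_{R}\cdot\omega_{e}$ produced by the feedback exactly cancels the $ab\alpha\,e_{R}\cdot\omega_{e}$ coming from $\dot{\Psi}$, leaving $\dot{U}=-a^{2}\alpha\left\Vert \omega_{e}\right\Vert ^{2}\leq0$. Since \eqref{eq:InitialConditions_ExponentialStability} is precisely the condition $U(0)\leq ab\alpha\psi$ (after bounding $\omega_{e}(0)\cdot J\omega_{e}(0)\leq\lambda_{j}\left\Vert \omega_{e}(0)\right\Vert ^{2}$), monotonicity of $U$ gives $ab\alpha\Psi(R_{e}(t))\leq U(t)\leq U(0)\leq ab\alpha\psi$ for all $t$, which is the invariance you need before the sharpened bound $\Psi\leq\frac{1}{2-\psi}\left\Vert e_{R}\right\Vert ^{2}$ and the quadratic sandwich $x^{\top}M_{2}x\leq V\leq x^{\top}M_{3}x$ can be applied to conclude $\dot{V}\leq-\kappa V$. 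You should replace your $V$-based invariance claim with this $U$-based argument (or an equivalent cross-term-free comparison function); the rest of your proof then goes through as written.
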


\begin{proof}
The proof uses several ideas discussed in \cite{Lee2015}, and is
given in Appendix \ref{sec:Appendix_A}.
\end{proof}

\subsection{$H_{\infty}$ Inverse Optimal Attitude Tracking}

In this section, we build on earlier results in order to obtain a
robust attitude tracking control law on $SO\left(3\right)$. In particular,
we consider the tracking error system \eqref{eq:TrackingErrorSystem_ExtendedDisturbance},
and propose a state feedback which ensures $H_{\infty}$ inverse optimal
attitude tracking with respect to the extended disturbance $d_{e}$
specified in \eqref{eq:ExtendedDisturbance}.

We now state the main result concerning $H_{\infty}$ inverse optimal
attitude tracking.
\begin{thm}
\label{thm:MainResult}Consider the tracking error system \eqref{eq:TrackingErrorSystem_ExtendedDisturbance},
with the bounded extended disturbance $d_{e}$ defined as in \eqref{eq:ExtendedDisturbance}.
Consider also the inverse optimal $H_{\infty}$ control in which the
objective is to minimize the cost functional
\begin{multline}
J_{a}(u)=\sup_{d\in D}\bigg(\lim_{t\rightarrow\infty}\bigg\{4V\left(R_{e},\omega_{e}\right)\\
+\int_{0}^{t}\left[l\left(R_{e},\omega_{e}\right)+ru^{\top}u-\gamma^{2}d_{e}^{\top}d_{e}\right]d\tau\bigg\}\bigg),\label{eq:CostFunctional}
\end{multline}
where $V\left(R_{e},\omega_{e}\right)$ is the candidate storage function
\eqref{eq:CandidateStorageFunction_SO3}, the state penalty function
\begin{equation}
l:=4a^{2}\alpha\left\Vert \omega_{e}\right\Vert ^{2}+4b^{2}\alpha\left\Vert e_{R}\right\Vert ^{2}-2bJ\omega_{e}\cdot E^{\top}\left(R_{e}\right)\omega_{e},\label{eq:StatePenalty_Sys2_B}
\end{equation}
 and $r>0$ is a scalar penalizing the control effort. Suppose that
the coefficients $\alpha$, $a$, $b$, and $c$ in $V$ and $l$
satisfy the conditions \eqref{eq:Conditions_AuxSys_A}-\eqref{eq:alpha}
given in Theorem \ref{thm:Theorem_AuxiliarySystem}. Then:
\end{thm}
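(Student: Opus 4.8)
The plan is to recognize the tracking error system \eqref{eq:TrackingErrorSystem_ExtendedDisturbance} as an instance of the general system \eqref{eq:Dynamics_Prelim} and then apply Lemma \ref{lem:InverseOptimal_Prelim} directly, using Theorem \ref{thm:Theorem_AuxiliarySystem} to discharge its stabilizability hypothesis. First I would take the state to be $x=(e_{R},\omega_{e})$ and observe that the extended disturbance $d_{e}$ and the control $u$ enter \eqref{eq:TrackingErrorSystem_ExtendedDisturbance} through the same channel $J^{-1}$, so that $g_{1}=g_{2}$ for this problem. Reading off the terms of $\dot{V}$ in \eqref{eq:Vdot_Sys1_A} that are linear in $u$ and in $d_{e}$ respectively, I identify $(V_{x}g_{2})^{\top}=(V_{x}g_{1})^{\top}=a\omega_{e}+be_{R}$. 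With this identification the auxiliary stabilizing feedback \eqref{eq:Feedback_AuxSys} becomes exactly \eqref{eq:StateFeedback_A}, the general auxiliary system \eqref{eq:AuxiliarySystem_Prelim} becomes exactly \eqref{eq:AuxiliarySystem} (the worst-case term being $\tfrac{1}{\gamma^{2}}(a\omega_{e}+be_{R})$), and the $H_{\infty}$ feedback \eqref{eq:Feedback_Hinf} becomes $u=-\tfrac{2}{r}(a\omega_{e}+be_{R})$.

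Next I would compute the inverse-optimal state penalty from its definition \eqref{eq:Hinf_InvOpt_StatePenalty}. Taking $V_{x}f$ to be the part of \eqref{eq:Vdot_Sys1_A} surviving when $u=d_{e}=0$, namely $\tfrac{b}{2}J\omega_{e}\cdot E^{\top}(R_{e})\omega_{e}+2c\,e_{R}\cdot\omega_{e}$, and using $|V_{x}g_{2}|^{2}=|V_{x}g_{1}|^{2}=\|a\omega_{e}+be_{R}\|^{2}$, the definition yields $l=-2bJ\omega_{e}\cdot E^{\top}\omega_{e}-8c\,e_{R}\cdot\omega_{e}+4\alpha\|a\omega_{e}+be_{R}\|^{2}$ with $\alpha$ as in \eqref{eq:alpha}. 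Expanding the squared norm produces the cross term $8\alpha ab\,e_{R}\cdot\omega_{e}$, and the choice $c=ab\alpha$ from \eqref{eq:Conditions_AuxSys_A} is exactly what cancels it against $-8c\,e_{R}\cdot\omega_{e}$, leaving the stated penalty \eqref{eq:StatePenalty_Sys2_B}. This simultaneously explains the purpose of that gain condition.

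It then remains to verify the single hypothesis of Lemma \ref{lem:InverseOptimal_Prelim}, namely that \eqref{eq:StateFeedback_A} stabilizes the auxiliary system with respect to $V$; this is precisely the content of Theorem \ref{thm:Theorem_AuxiliarySystem} under the conditions \eqref{eq:Conditions_AuxSys_A}-\eqref{eq:alpha}. Invoking the lemma then delivers every asserted conclusion at once: the feedback $u=-\tfrac{2}{r}(a\omega_{e}+be_{R})$ minimizes the cost \eqref{eq:CostFunctional}, the value function is $4V$, the optimal cost equals $4V(R_{e}(0),\omega_{e}(0))$, the worst-case disturbance is $d_{e}^{*}=\tfrac{2}{\gamma^{2}}(a\omega_{e}+be_{R})$, $V$ solves the HJI equation \eqref{eq:HJI_Equation_Prelim}, and the attenuation bound \eqref{eq:Disturbance_attenuation_level-gen} holds. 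Meaningfulness of the cost follows from the remark after Lemma \ref{lem:InverseOptimal_Prelim}: since $l=-4\dot{V}_{\text{aux}}$ and Theorem \ref{thm:Theorem_AuxiliarySystem} shows $\dot{V}_{\text{aux}}$ to be negative definite, $l$ is positive definite, which is where the gain condition $b\lambda_{j}<a^{2}\alpha$ enters, as it forces the indefinite term $-2bJ\omega_{e}\cdot E^{\top}\omega_{e}$ to be dominated by $4\alpha a^{2}\|\omega_{e}\|^{2}$.

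The main obstacle is that Lemma \ref{lem:InverseOptimal_Prelim} is stated for \emph{global} asymptotic stability, whereas on $SO(3)$ the topological obstruction \cite{bhat2000topological} permits only \emph{almost} global asymptotic stability. Hence the delicate point is to argue that the lemma's conclusions survive in the almost-global sense: for all initial conditions outside the nowhere-dense stable manifold of the undesired equilibria, the closed-loop trajectories converge, so the limit $t\rightarrow\infty$ in \eqref{eq:CostFunctional} is well defined and $V(x(t))\rightarrow0$, which is what makes the optimal cost equal $4V(x(0))$ and renders the attenuation estimate meaningful. I expect the remaining steps to be purely mechanical substitutions into the preceding identities.
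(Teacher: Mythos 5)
Your proposal is correct and follows essentially the same route as the paper: identify $V_{x}f$, $V_{x}g_{1}=V_{x}g_{2}=(a\omega_{e}+be_{R})^{\top}$ from the Lyapunov rate \eqref{eq:Vdot_Sys1_A}, substitute into \eqref{eq:Hinf_InvOpt_StatePenalty} to recover \eqref{eq:StatePenalty_Sys2_B} via the cancellation forced by $c=ab\alpha$, and invoke Lemma \ref{lem:InverseOptimal_Prelim} with Theorem \ref{thm:Theorem_AuxiliarySystem} discharging the stabilizability hypothesis. Your explicit attention to the global-versus-almost-global gap in applying the lemma is a point the paper passes over silently, and is a worthwhile addition rather than a deviation.
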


\begin{enumerate}
\item $V$ and $l$ are nonnegative, and the state-feedback PD control 
\begin{equation}
u=-\frac{2}{r}\left(a\omega_{e}+be_{R}\right).\label{eq:StateFeedback_Main_A}
\end{equation}
 solves the inverse optimal $H_{\infty}$ control problem for the
closed-loop attitude tracking system \eqref{eq:TrackingErrorSystem_ExtendedDisturbance}
with respect to the performance index \eqref{eq:CostFunctional},
where the worst-case extended disturbance $d_{e}$ is 
\begin{equation}
d_{e}^{*}=\frac{2}{\gamma^{2}}\left(a\omega_{e}+be_{R}\right).\label{eq:WorstCaseExtendedDisturbance}
\end{equation}
\item The optimal cost is $J_{a}(u)=4V_{0}=4V(R_{e}(0),\omega_{e}(0))$,
and the following disturbance attenuation level 
\[
\int_{0}^{\infty}\left[l\left(R_{e},\omega_{e}\right)+ru^{\top}u\right]dt\leq\gamma^{2}\int_{0}^{\infty}\left\Vert d_{e}\right\Vert ^{2}dt,
\]
is achieved.
\end{enumerate}
\begin{proof}
This theorem is a consequence of Lemma \ref{lem:InverseOptimal_Prelim}
and Theorem \ref{thm:Theorem_AuxiliarySystem}. In particular, we
note the following correspondence between the terms in \eqref{eq:Vdot_Dynamics_Prelim}
and those in the Lyapunov rate \eqref{eq:Vdot_Sys1_A} along trajectories
of the tracking error system \eqref{eq:TrackingErrorSystem_ExtendedDisturbance}:
\begin{align*}
V_{x}f & =\frac{b}{2}J\omega_{e}\cdot E^{\top}(R_{e})\omega_{e}+2ce_{R}\cdot\omega_{e}\\
V_{x}g_{1} & =\left(a\omega_{e}+be_{R}\right)^{\top}\\
V_{x}g_{2} & =\left(a\omega_{e}+be_{R}\right)^{\top}
\end{align*}
Substituting these expressions in the state penalty \eqref{eq:Hinf_InvOpt_StatePenalty}
of Lemma \ref{lem:InverseOptimal_Prelim}, we obtain the following
expression:
\begin{multline*}
l\left(R_{e},\omega_{e}\right)=-4\left[\frac{b}{2}J\omega_{e}\cdot E^{\top}(R_{e})\omega_{e}+2ce_{R}\cdot\omega_{e}\right.\\
\left.-\left(\frac{1}{r}-\frac{1}{\gamma^{2}}\right)\left|a\omega_{e}+be_{R}\right|^{2}\right]
\end{multline*}

For the first claim, we note that Lemma \ref{lem:InverseOptimal_Prelim}
and the almost-global asymptotic stability of the auxiliary system
\eqref{eq:AuxiliarySystem}, established in Theorem \ref{thm:Theorem_AuxiliarySystem},
directly imply that the control law \eqref{eq:StateFeedback_Main_A}
minimizes the cost functional \eqref{eq:CostFunctional} with the
above state penalty. Moreover, substituting $c=ab\alpha$ from \eqref{eq:Conditions_AuxSys_A}-\eqref{eq:alpha},
we note that the state penalty can be re-stated as
\[
l=4a^{2}\alpha\left\Vert \omega_{e}\right\Vert ^{2}+4b^{2}\alpha\left\Vert e_{R}\right\Vert ^{2}-2bJ\omega_{e}\cdot E^{\top}\left(R_{e}\right)\omega_{e}.
\]
The second claim also follows from Lemma \ref{lem:InverseOptimal_Prelim},
and stipulates that the $\mathcal{L}_{2}$ gain from the extended
disturbance $d_{e}$ to the tracking errors $\left(e_{R},\omega_{e}\right)$
and the control input $u$ is bounded by $\gamma$.
\end{proof}
\begin{rem}
It is interesting to note that in both the state feedback \eqref{eq:StateFeedback_Main_A}
and the worst-case extended disturbance \eqref{eq:WorstCaseExtendedDisturbance},
the second term is proportional to the configuration error vector
$e_{R}$. Recall that in terms of the axis-angle representation of
the attitude error $R_{e}=R_{d}^{\top}R$, this vector can be expressed
as $e_{R}=v\sin\theta_{e}$, where $v$ denotes the axis of rotation
and $\theta_{e}$ the angle of rotation between the actual and desired
orientations. Thus, we have that the proportional action of the controller
and its disturbance rejection capability increase in the interval
$\left|\theta_{e}\right|\in[0,\pi/2]$, and decrease to zero in the
interval $\left|\theta_{e}\right|\in\left[\pi/2,\pi\right]$.  This
is an inherent limitation of PD controllers (with scalar gains) synthesised
using the chordal metric \eqref{eq:ConfigurationErrorFunction}, and
is known to cause arbitrarily slow convergence for initial errors
arbitrarily close to $180^{\circ}$ \cite{Lee2011ACC}.
\end{rem}

\begin{rem}
The preceding results have been proven for a PD control law \eqref{eq:PD+}
without a feedforward term, but remain valid for a PD+ control law
which uses feedforward compensation \eqref{eq:Feedforward} to cancel
the contribution of the reference-related terms in the error dynamics
\eqref{eq:ErrorDynamics}. This compensation requires accurate knowledge
of the reference trajectory ($\omega_{d}$, $\dot{\omega}_{d}$) and
the inertia properties of the rigid body. In general, this information
is more likely to be available in spacecraft applications. If the
reference or inertia properties are not accurately known, as is common
for small rotorcraft and UAVs, then it might be better to avoid using
the feedforward terms.
\end{rem}

\subsection{Tuning Guidelines }

We recall the inverse optimal $H_{\infty}$ control law from \eqref{eq:StateFeedback_Main_A}:
\[
u=-\frac{2}{r}\left(a\omega_{e}+be_{R}\right),
\]
where the $a$, $b$, and $r$ are subject to the requirements given
by \eqref{eq:Conditions_AuxSys_A}, compactly written as
\begin{equation}
0<b\lambda_{j}<a^{2}\left(\frac{1}{r}-\frac{1}{\gamma^{2}}\right).\label{eq:Tuning_conds}
\end{equation}
The PD parameters ($2b/r$, $2a/r$) can be chosen using any tuning
method and then checked for the condition \eqref{eq:Tuning_conds}
which is quite easy to satisfy. Linearized model of the attitude dynamics
can be especially helpful in fully exploiting the powerful frequency
response methods and especially structured $H_{\infty}$ control design
methods, see \cite{Invernizzi2020GCD} for details. Once the linear
design, having adequate performance and robustness properties, has
been finalized, the condition \eqref{eq:Tuning_conds} can be checked
to guarantee almost-global disturbance rejection properties as stipulated
in Theorem \ref{thm:MainResult}. If the condition \eqref{eq:Tuning_conds}
is not satisfied, one can reduce proportional action $b$ or increase
derivative action $a$. Greater disturbance attenuation can be accomplished
by decreasing $r$, thereby generating stronger control action. This
highlights the trade off between disturbance rejection and control
effectiveness. 

\section{Simulation Results\label{sec:Simulation-Results}}

The results presented in section \ref{sec:Robust-Attitude-Tracking}
may seem too technical to apply in practical problems, however, we
illustrate through an example that nonlinear $H_{\infty}$ guarantees
can be easily obtained for a control law design through standard (linear)
control synthesis techniques. 

We consider the attitude control problem of a small satellite and
demonstrate a simple design approach leading to control laws with
appealing performance. To this end, we consider the problem setup:
\[
J=\text{diag}\left(10,10,8\right)\text{kg\ensuremath{\text{m}^{2}}},
\]
\[
d=\begin{bmatrix}0.005-0.05\sin(2\pi t/400)+\delta(200,0.2)+v_{1}\\
0.005-0.05\sin(2\pi t/400)+\delta(250,0.2)+v_{2}\\
0.005-0.03\sin(2\pi t/400)+\delta(300,0.2)+v_{3}
\end{bmatrix},
\]
where $v_{1}$, $v_{2}$, $v_{3}$ are the zero mean white Gaussian
noises with variances $\sigma^{2}=0.015^{2}$. The desired angular
velocity is:
\[
\omega_{d}=\begin{bmatrix}0.05\sin(2\pi t/400)\\
-0.05\sin(2\pi t/400)\\
0.03\sin(2\pi t/400)
\end{bmatrix},
\]
and the desired attitude kinematics ($\dot{R}_{d}=R_{d}\omega_{d}^{\times}$)
is initialized with $R_{d}(0)=I$. The attitude dynamics is initialized
with $\omega=(0,0,0)$ and rotation matrix equivalent of the quaternion
$q(0)=\begin{bmatrix}0.3 & 0.2 & 0.3 & -0.8832\end{bmatrix}^{\top}$.
For direct comparison with a relevant result based on quaternions,
this problem setup has been chosen as in \cite{Luo2005}, except for
the variance of the white noises in $d$ which is taken higher for
this paper. 

For control synthesis, the decoupled single-axis model is considered
and linearized around the origin. This simplified model, along with
some nominal performance and robustness weights, is used to tune the
(linear) structured $H_{\infty}$ controller via nonsmooth optimization
techniques available in MATLAB. After some iterations of the algorithm,
a linear control law meeting all the requirements on the single-axis
linearized is found: 
\[
u=-\left(k_{P}\theta_{e}+k_{D}\omega_{e}\right),
\]
\begin{align*}
k_{P} & =0.9475, & k_{D}=7.2836.
\end{align*}

One could simply take the values of $k_{P}$, $k_{D}$ and check if
the nonlinear control law
\[
u=-k_{D}\omega_{e}+k_{P}e_{R},
\]
would meet the requirements \eqref{eq:Tuning_conds}, where we have
taken $k_{P}=\frac{2b}{r}e_{R}$, $k_{D}=\frac{2a}{r}\omega_{e}$
. It can be easily verified that the nonlinear controller, with $k_{P},k_{D}$
given above, meets the conditions \eqref{eq:Tuning_conds} and therefore
provides a nonlinear $H_{\infty}$ guarantee, i.e., the energy gain
from extended disturbance to the tracking errors and control input
is upper bounded by a finite constant $\gamma$. 

The performance of the controller on both linearized single-axis model
and full nonlinear $SO(3)$ model can be examined in the filtered-step
response provided in Figure \ref{fig:Filtered-step-reference}. Clearly,
the response of linearized system with a linear controller is not
the same as the fully nonlinear counterpart, however, the difference
is not so great considering that the magnitude of the step input is
2 radians. Therefore, it may be useful to start the control design
based on a linearized model (at least for a first-cut design) and
then use this design to select the gains of the nonlinear system.
\begin{figure}[tbh]
\includegraphics[width=0.9\columnwidth]{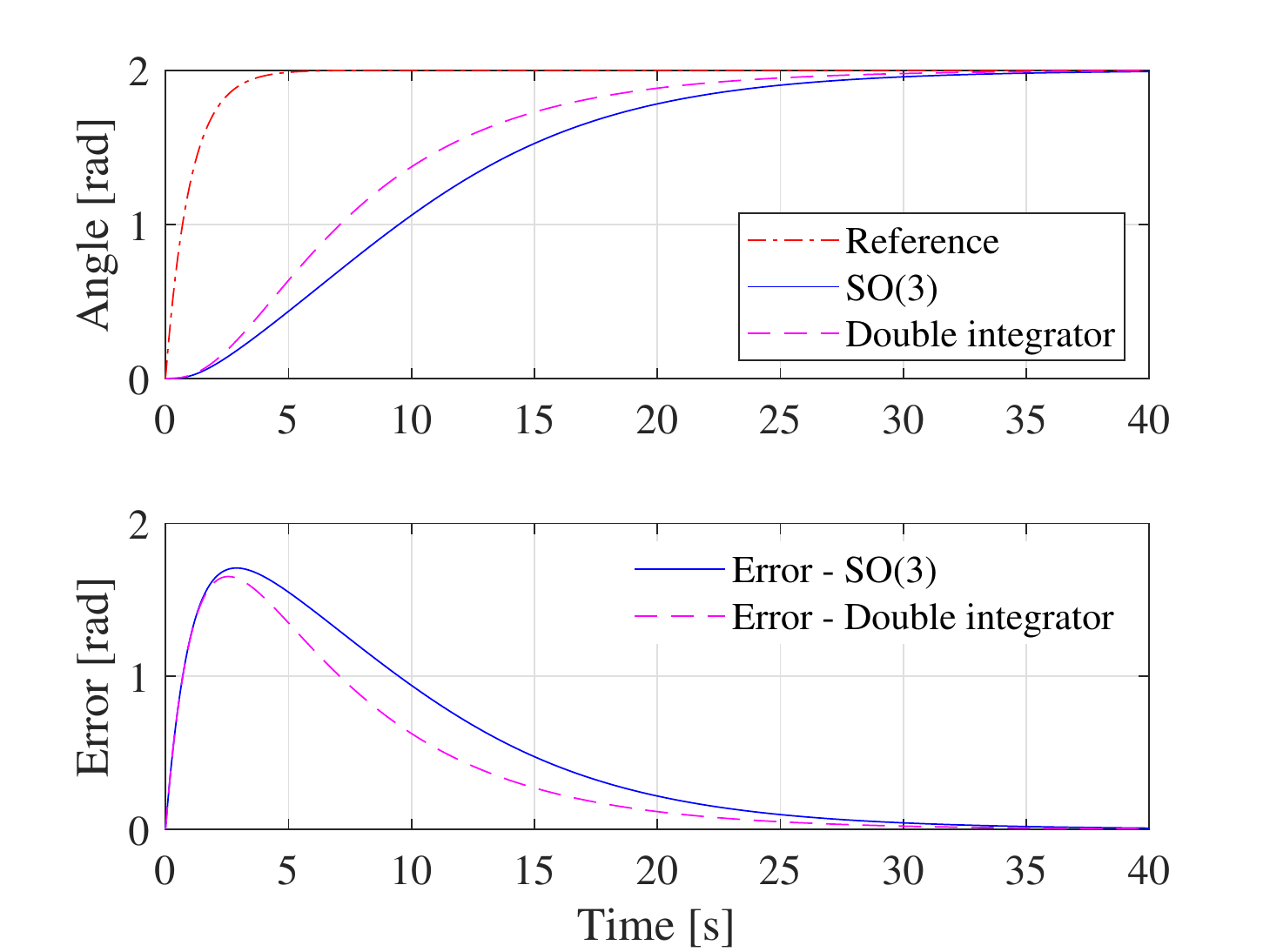}

\caption{The filtered step reference of $2$ radians, and response on $SO(3)$
and the linearized single axis model. \label{fig:Filtered-step-reference}}
\end{figure}

The reference tracking capability of the proposed controller is evaluated
on the problem setup given above. The reference attitude trajectory
along with the achieved attitude is given in Figure \ref{fig:Attitude-tracking}.
The error in attitude is plotted in terms of Euler angles and reported
in Figure \ref{fig:Euler-angular-errors}, with Euler rate errors
in Figure \ref{fig:Rate-errors--}. The components of control torque
vector demanded by the proposed controller are shown in Figure \ref{fig:Input---}. 

The responses show adequate tracking and disturbance rejection properties
and are no worse than those reported in literature for a quaternion
based result \cite{Luo2005}. 
\begin{figure}[tbh]
\includegraphics[width=0.9\columnwidth]{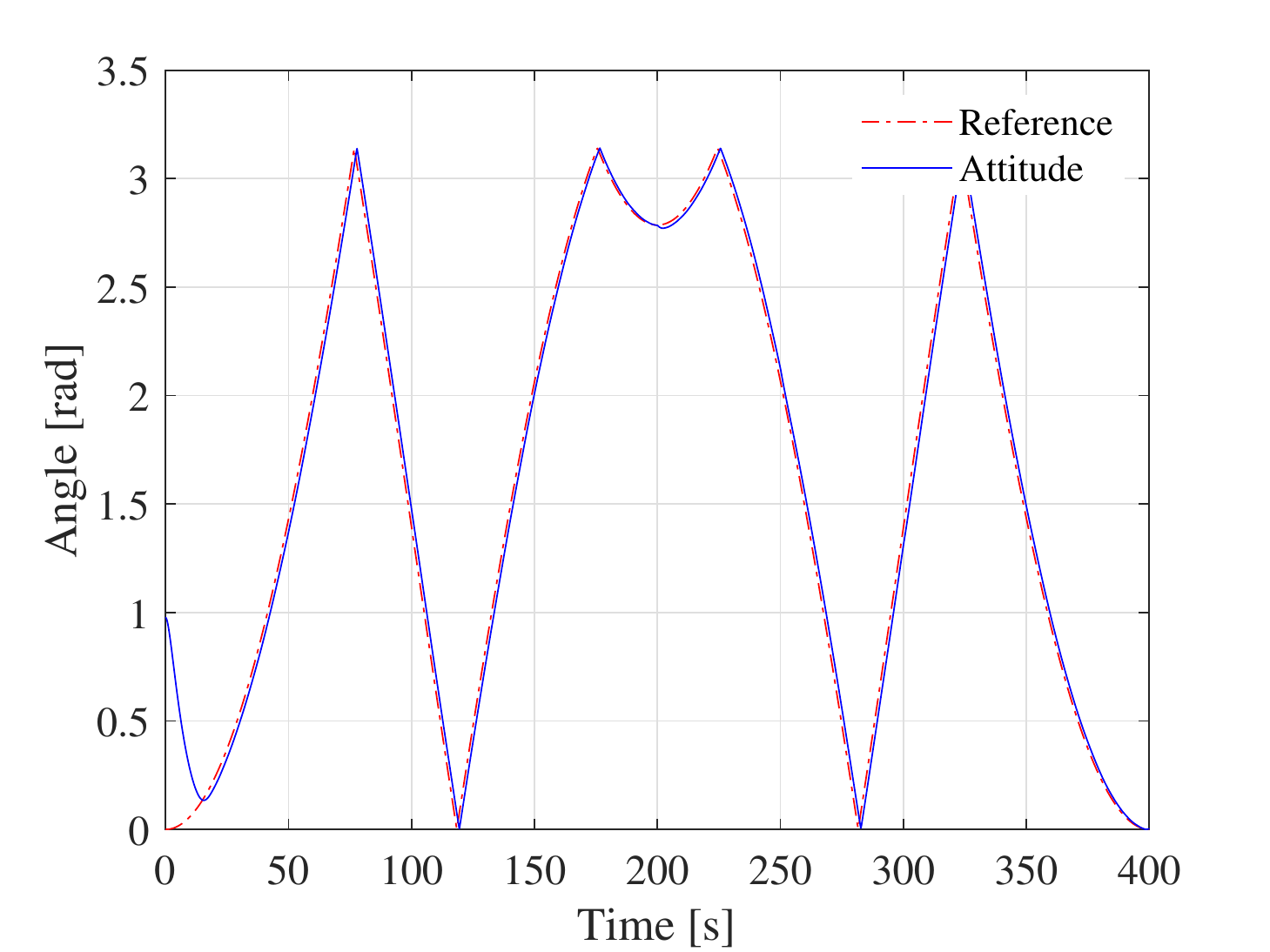}

\caption{The plot of reference trajectory and actual attitude, in terms of
angle or axis angle representation. \label{fig:Attitude-tracking}}
\end{figure}
\begin{figure}[tbh]
\includegraphics[width=0.9\columnwidth]{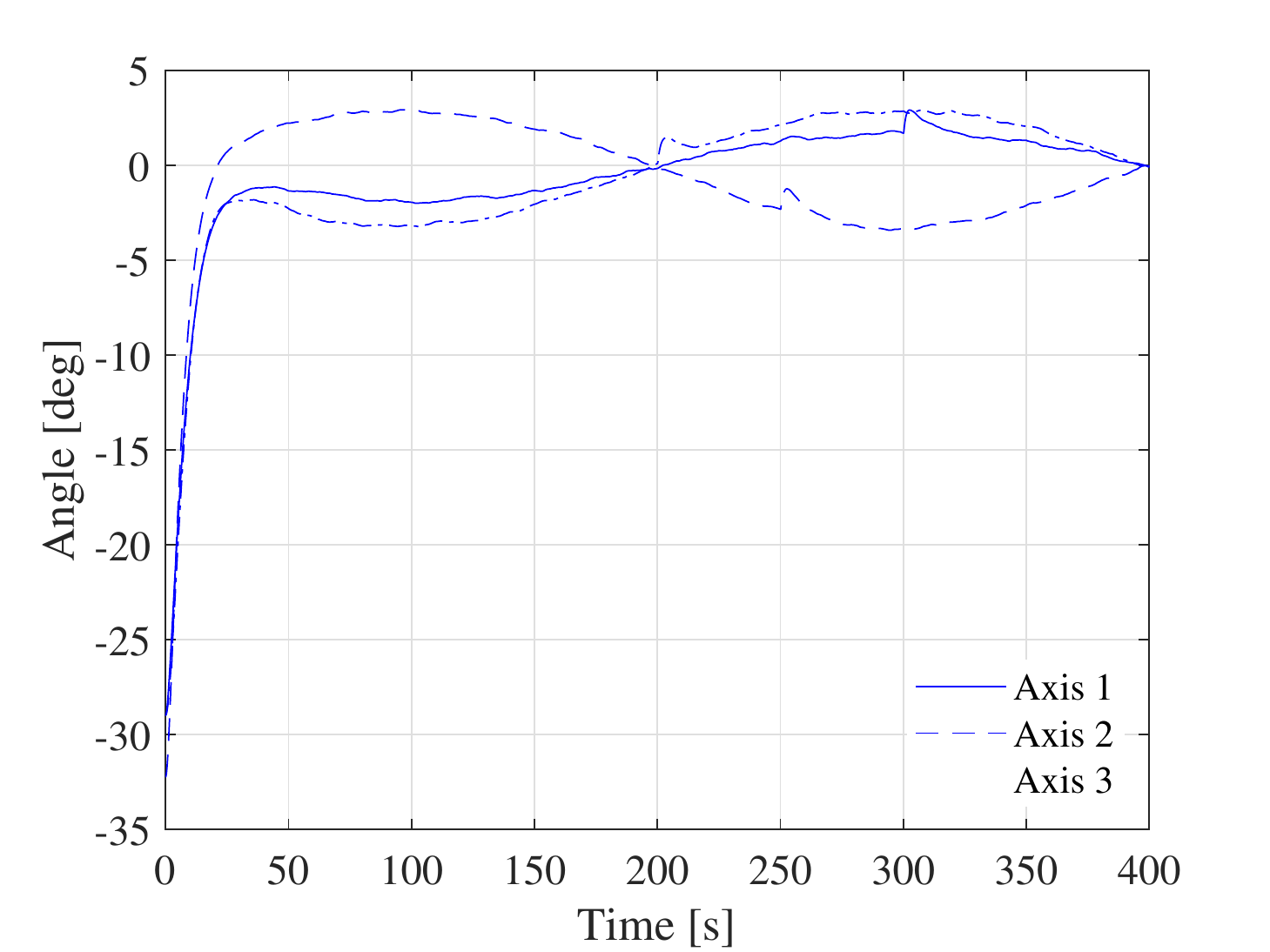}

\caption{The plot of tracking error, in terms of three Euler angular errors.
\label{fig:Euler-angular-errors}}
\end{figure}
\begin{figure}[tbh]
\includegraphics[width=0.9\columnwidth]{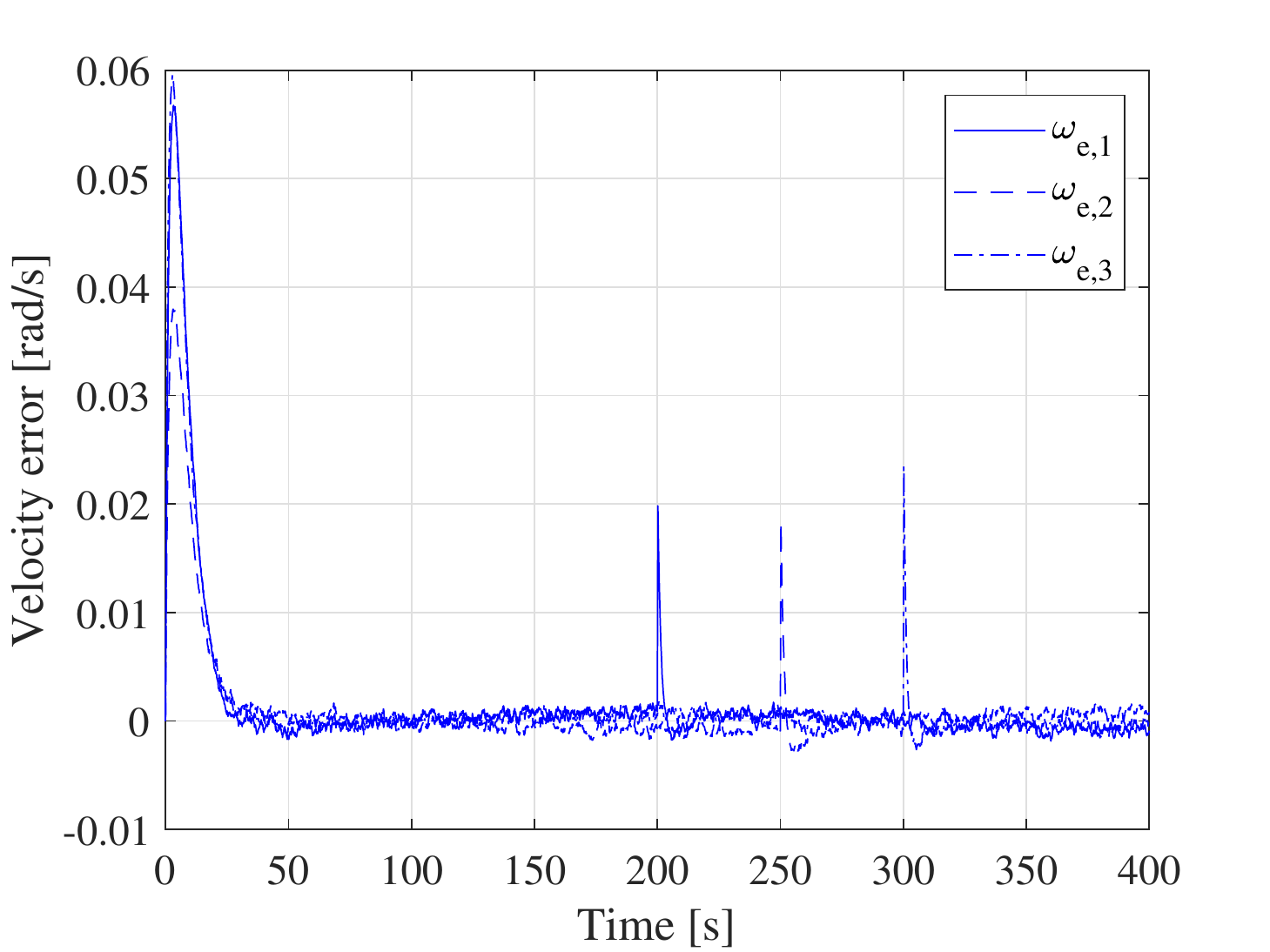}\caption{The plot of angular rate errors, in terms of Euler angle rates. \label{fig:Rate-errors--}}
\end{figure}
\begin{figure}[tbh]
\includegraphics[width=0.9\columnwidth]{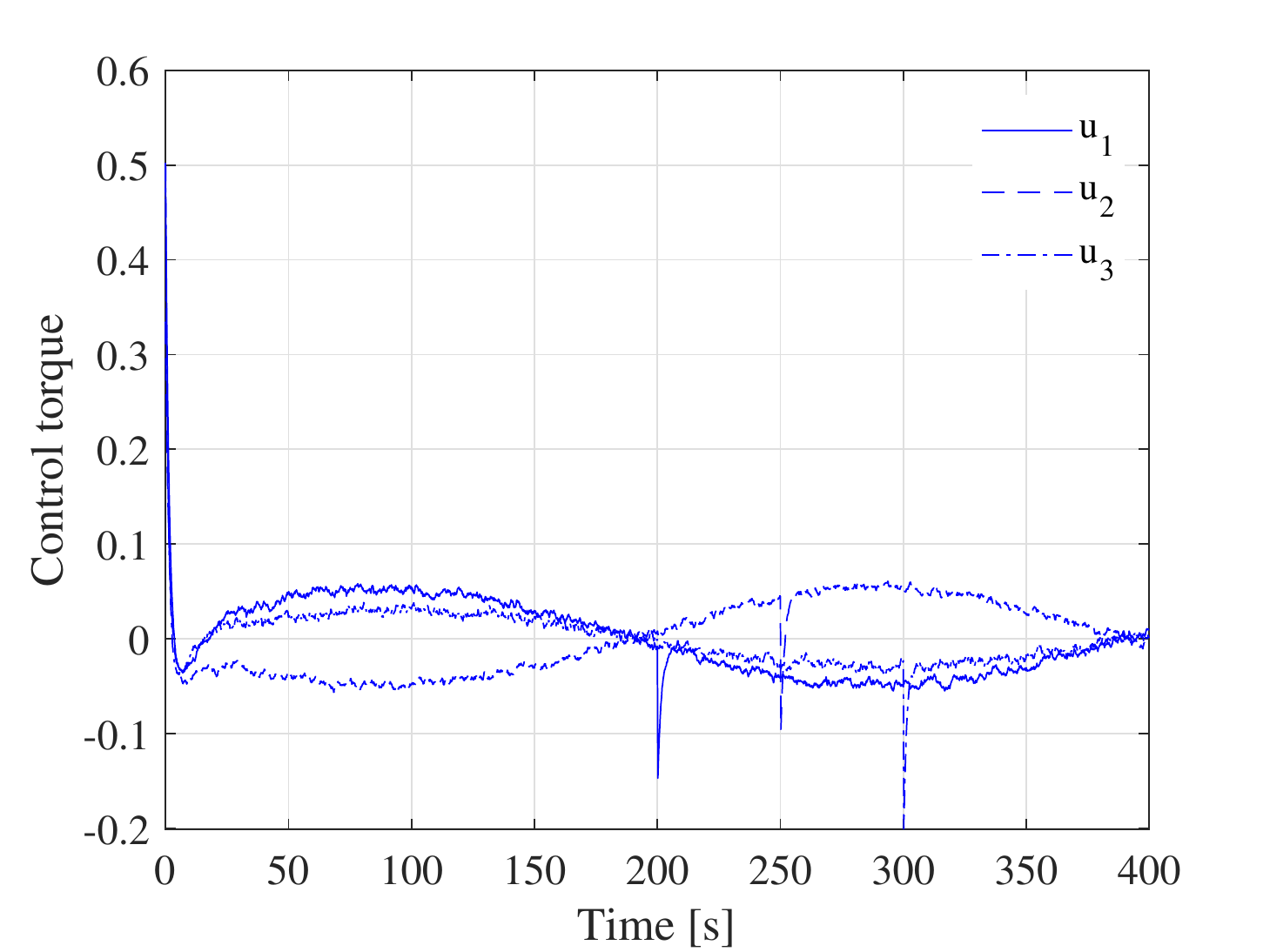}\caption{The plot of components of control torque vector demanded by the proposed
control law. \label{fig:Input---}}
\end{figure}

\section{Conclusion}

The disturbance attenuation problem for the attitude tracking using
rotation matrices has considered and addressed using an inverse optimal
approach. It has been shown that the energy gain from disturbances
to the tracking error is upper bounded by a given constant $\gamma$
if certain mild conditions on the controller gains are satisfied.
A practical method for the tuning of controller gains has been illustrated
using a numerical example, which draws on the powerful structured
$H_{\infty}$ tuning method for the linearized single axis model.
The proposed controller is found to exhibit competitive performance
in the numerical simulations.

\section{Proof of Theorem \ref{thm:Theorem_AuxiliarySystem}\label{sec:Appendix_A}}
\begin{proof}
Replacing the disturbance $d_{e}$ in \eqref{eq:Vdot_Sys1_A} by the
term $\frac{1}{\gamma^{2}}\left(a\omega_{e}+be_{R}\right)$ and substituting
the feedback law \eqref{eq:StateFeedback_A}, the Lyapunov rate for
the auxiliary system can be expressed as

\begin{multline*}
\dot{V}_{\text{aux}}=\frac{b}{2}J\omega_{e}\cdot E^{\top}(R_{e})\omega_{e}-\left(\frac{1}{r}-\frac{1}{\gamma^{2}}\right)\left\Vert a\omega_{e}+be_{R}\right\Vert ^{2}\\
+2ce_{R}\cdot\omega_{e}.
\end{multline*}
Expanding the second term, and using the expression for $\alpha$
in \eqref{eq:alpha}, the Lyapunov rate simplifies to

\begin{multline*}
\dot{V}_{\text{aux}}=-a^{2}\alpha\left\Vert \omega_{e}\right\Vert ^{2}-b^{2}\alpha\left\Vert e_{R}\right\Vert ^{2}-2\left(ab\alpha-c\right)e_{R}\cdot\omega_{e}\\
+\frac{b}{2}J\omega_{e}\cdot E^{\top}(R_{e})\omega_{e}.
\end{multline*}
Using the MATLAB Symbolic Computation Tool, it can be shown that the
matrix 2-norm of $E(R_{e})$ is
\[
\left\Vert E\left(R_{e}\right)\right\Vert =2.
\]
Therefore, choosing $c$ as in \eqref{eq:Conditions_AuxSys_A}, the
Lyapunov rate can be bounded as follows:
\[
\dot{V}_{\text{aux}}\leq-a^{2}\alpha\left\Vert \omega_{e}\right\Vert ^{2}-b^{2}\alpha\left\Vert e_{R}\right\Vert ^{2}+b\lambda_{j}\left\Vert \omega_{e}\right\Vert ^{2}
\]
This can be expressed more compactly as
\begin{equation}
\dot{V}_{\text{aux}}\leq-x^{\top}M_{1}x,\label{eq:Vdot_Aux_UB}
\end{equation}
where $x=\left[\left\Vert e_{R}\right\Vert ,\left\Vert \omega_{e}\right\Vert \right]^{\top}$,
and $M_{1}\in\mathbb{R}^{2\times2}$ is given by
\begin{equation}
M_{1}:=\begin{bmatrix}b^{2}\alpha & 0\\
0 & a^{2}\alpha-b\lambda_{j}
\end{bmatrix}.\label{eq:M1}
\end{equation}
Therefore, the conditions on $a$, $b$ and $\alpha$, given in \eqref{eq:Conditions_AuxSys_A}-\eqref{eq:alpha},
ensure that $M_{1}$ is positive definite. In addition, these conditions
also ensure that
\[
2acI=2a^{2}b\alpha I>2b^{2}\lambda_{j}I>b^{2}J.
\]
Consequently, the sufficient condition \eqref{eq:V_PosDef} for the
positive-definiteness of $V$ is satisfied. This shows that the desired
equilibrium of the auxiliary system \eqref{eq:AuxiliarySystem} is
asymptotically stable, and that $e_{R},\omega_{e}\rightarrow0$ as
$t\rightarrow\infty$.

Next, we show exponential stability. Define
\[
U=\frac{a}{2}\omega_{e}\cdot J\omega_{e}+ab\alpha\Psi.
\]
From \eqref{eq:ConfigurationErrorFunction_TimeDerivative}, \eqref{eq:AuxiliarySystem},
\eqref{eq:StateFeedback_A}, and \eqref{eq:alpha}, it follows that:
\begin{align*}
\dot{U} & =a\omega_{e}\cdot J\dot{\omega}_{e}+ab\alpha e_{R}\cdot\omega_{e}\\
 & =a\omega_{e}\cdot\left(-\omega_{e}^{\times}J\omega_{e}-\alpha\left(a\omega_{e}+be_{R}\right)\right)+ab\alpha e_{R}\cdot\omega_{e}\\
 & =-a^{2}\alpha\left\Vert \omega_{e}\right\Vert ^{2}
\end{align*}
This implies that $U\left(t\right)$ is non-increasing. Therefore,
for the set of initial conditions in \eqref{eq:InitialConditions_ExponentialStability},
we obtain
\begin{align*}
\Psi\left(R_{e}\left(t\right)\right)\leq\frac{1}{ab\alpha}U\left(t\right) & \leq\frac{1}{ab\alpha}U\left(0\right)\\
 & \leq\frac{\lambda_{j}}{2b\alpha}\left\Vert \omega_{e}\left(0\right)\right\Vert ^{2}+\Psi\left(R_{e}\left(0\right)\right)\\
 & \leq\psi.
\end{align*}
Thus, the upper bound in \eqref{eq:Psi_Bounds} is satisfied. Consequently,
from \eqref{eq:Psi_Bounds} and \eqref{eq:CandidateStorageFunction_SO3},
we have that
\begin{equation}
x^{\top}M_{2}x\leq V_{\text{aux}}\leq x^{\top}M_{3}x,\label{eq:V_Aux_Bounds}
\end{equation}
where $x=\left[e_{R}^{\top},\omega_{e}^{\top}\right]^{\top}$, and
$M_{2},M_{3}\in\mathbb{R}^{6\times6}$ are given by
\begin{align*}
M_{2} & =\begin{bmatrix}2cI & bJ\\
bJ & aJ
\end{bmatrix}, & M_{3}= & \begin{bmatrix}\frac{4c}{2-\psi}I & bJ\\
bJ & aJ
\end{bmatrix}.
\end{align*}
We have already seen that the conditions on $\alpha$, $a$, $b$,
and $c$ in \eqref{eq:Conditions_AuxSys_A}-\eqref{eq:alpha} ensure
that $M_{2}$ is positive definite. Now we note that the same conditions
also ensure that $M_{3}$ is positive definite. In particular, a sufficient
condition for $M_{3}$ to be positive definite is given by
\[
\frac{4ac}{2-\psi}I>b^{2}J.
\]
Since $c=ab\alpha$, and $0<2-\psi<2$, it follows from the conditions
in \eqref{eq:Conditions_AuxSys_A}-\eqref{eq:alpha} that
\[
\frac{4ac}{2-\psi}I=\frac{4a^{2}b\alpha}{2-\psi}I>2a^{2}b\alpha I>2b^{2}\lambda_{j}I>b^{2}J.
\]
Consequently, from \eqref{eq:Vdot_Aux_UB}-\eqref{eq:V_Aux_Bounds},
we conclude that the desired equilibrium is exponentially stable in
$\left(e_{R},\omega_{e}\right)$. Note that in \eqref{eq:InitialConditions_ExponentialStability},
the initial attitude error almost covers $SO(3)$, excluding only the
three attitude errors corresponding to the undesired equilibria which
are unstable. Furthermore, the initial angular velocity errors tend
to cover $\mathbb{R}^{3}$ as $\alpha\rightarrow\infty$. Therefore,
the desired equilibrium is almost semi-globally exponentially stable.
\end{proof}
\bibliographystyle{IEEEtran}
\bibliography{Library}

\end{document}